\documentclass[a4paper,USenglish,cleveref]{lipics-v2019}

\bibliographystyle{plainurl}

\usepackage{microtype} 
\usepackage{cite}
\usepackage{siunitx}
\usepackage{csquotes}
\usepackage{subcaption} 
\usepackage[basic]{complexity}
\usepackage[inline]{enumitem}
\usepackage{mathtools}

\graphicspath{{figures/}}


\usepackage{xcolor}


\title{Weak Unit Disk Contact Representations for Graphs without Embedding}


\author{Jonas Cleve}{Institut f\"ur Informatik, Freie Universit\"at Berlin}{jonascleve@inf.fu-berlin.de}{https://orcid.org/0000-0001-8480-1726}{Supported by ERC StG 757609.}
\authorrunning{J. Cleve}
\ccsdesc{Human-centered computing~Graph drawings}
\ccsdesc{Theory of computation~Computational geometry}
\keywords{unit disk contact representation, NP-hardness}
\hideLIPIcs{}
\nolinenumbers{}

\newtheorem{obs}[theorem]{Observation}

\begin{document}

\maketitle

\begin{abstract}
    Weak unit disk contact graphs are graphs that admit representing nodes as a collection of internally disjoint unit disks whose boundaries touch if there is an edge between the corresponding nodes.
    In this work we focus on graphs without embedding, i.e., the neighbor order can be chosen arbitrarily.
    We give a linear time algorithm to recognize whether a caterpillar, a graph where every node is adjacent to or on a central path, allows a weak unit disk contact representation.
    On the other hand, we show that it is \NP-hard to decide whether a tree allows such a representation.
\end{abstract}

\section{Introduction}

A \emph{disk contact graph} \(G=(V,E)\) is a graph that has a geometric realization as a collection of internally disjoint disks mapped bijectively to the node set \(V\) such that two disks touch if and only if the corresponding nodes are connected by an edge in \(E\).
In an attempt to tackle the open problem of recognizing embedded caterpillars for disk contact graphs, \emph{weak} disk contact graphs were introduced, which allow two disks to touch even if they don't share an edge.
it was shown in this setting that the problem of recognizing embedded caterpillars is \NP-hard by Chiu, Cleve, and Nöllenburg~\cite{chiu-eurocg19}.
We continue this line of research by looking at graphs \emph{without} embedding.

\section{Recognizing Caterpillars in Linear Time}

Similar to the algorithm by Klemz, Nöllenburg and Prutkin~\cite{DBLP:conf/gd/KlemzNP15} we efficiently decide whether a caterpillar \(G=(V,E)\), a graph where every node is adjacent to or lies on a central path, admits a weak unit disk contact representation (WUDCR).
Let \(\Delta\) be the maximum degree of \(G\).
If \(\Delta\geq 7\) it is impossible to find a WUDCR\@: no unit disk can have more than six other adjacent unit disks.
For \(\Delta\leq 4\), \(G\) can even be realized as a (\emph{strong}) UDCR~\cite{DBLP:conf/gd/KlemzNP15}, which is also a WUDCR\@.
For \(5\leq\Delta\leq 6\) some caterpillars can be realized and some cannot; see \cref{fig:caterpillar-delta-5} for an example.
This can be formalized as the following

\begin{figure}[t]
    \centering
    \includegraphics[scale=1,page=1]{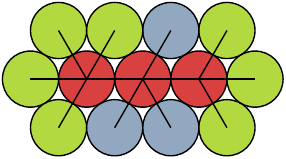}
    \caption{%
        A caterpillar with \(\Delta=5\), red (internal) nodes having degrees \(5\), \(5\), and \(4\).
        It becomes unrealizable when adding another child to the rightmost internal (red) node, giving it degree \(5\) as well.
    }%
    \label{fig:caterpillar-delta-5}
\end{figure}

\begin{figure}[t]
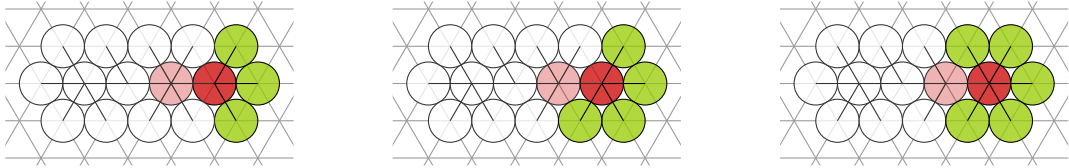

    \includegraphics[scale=1,page=6]{algorithm.pdf}
    \hfill%
    \includegraphics[scale=1,page=7]{algorithm.pdf}
    \hfill%
    \includegraphics[scale=1,page=8]{algorithm.pdf}
    \caption{%
        The three different cases from \cref{lem:caterpillar-number-of-leaves}: \(d_k\leq 4\) (left), \(d_k=5\) (center), and \(d_k=6\) (right).
    }%
    \label{fig:caterpillar-algorithm-example}
\end{figure}

\begin{lemma}%
\label[lemma]{lem:caterpillar-number-of-leaves}
Let \(G\) be a caterpillar, \(v_0,v_1,\ldots,v_k,v_{k+1}\) a longest path in \(G\) and \(d_i=\deg(v_i)\) for all \(1\leq i\leq k\).
Then \(G\) has a WUDCR iff for all \(1\leq \ell\leq k\): \(\sum_{i=1}^{\ell}d_i\leq 4\ell+2\).
\end{lemma}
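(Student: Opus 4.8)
The plan is to prove both directions by viewing the drawing as a packing of unit disks \emph{around a chain}. In any representation the disks of the spine vertices $v_1,\dots,v_k$ must touch consecutively and so form a chain $C$; every leaf of a spine vertex, together with the two path endpoints $v_0$ and $v_{k+1}$, is then a unit disk that touches $C$ and is interior-disjoint from all the others. The central quantity is therefore $K(\ell)$, the maximum number of pairwise interior-disjoint unit disks that can simultaneously touch a chain of $\ell$ consecutive unit disks. I would show $K(\ell)\le 2\ell+4$, with the straight hexagonal chain attaining equality, and derive both directions from this.

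For \emph{necessity}, suppose $G$ has a WUDCR and fix $\ell$. The disks of $v_1,\dots,v_\ell$ form a sub-chain $C_\ell$, and the disks touching $C_\ell$ are exactly those of $v_0$, of $v_{\ell+1}$, and of every non-path leaf of $v_1,\dots,v_\ell$; counting these gives $\sum_{i=1}^\ell d_i-2\ell+2$ disks. As they are interior-disjoint and all touch $C_\ell$, their number is at most $K(\ell)\le 2\ell+4$, which rearranges to $\sum_{i=1}^\ell d_i\le 4\ell+2$. (The same argument applies to any block of consecutive spine vertices and yields the analogous bound; the prefixes in the statement are the blocks anchored at $v_1$.)

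For \emph{sufficiency} I would build the representation explicitly on the straight hexagonal chain: place $D_i$ at $(2(i-1),0)$, place $v_0$ and $v_{k+1}$ at the two axial ends, and use the $2k+2$ valley/end slots at height $\pm\sqrt3$, where the slot in column $j$ (at $x=2j-1$) touches $D_i$ exactly when $j\in\{i-1,i\}$. Assigning the $d_i-2$ non-path leaves of $v_i$ to these slots is a bipartite problem with an interval structure (vertex $v_i$ sees the two columns $i-1,i$, each of capacity $2$). By Hall's theorem the assignment is feasible iff each block of consecutive vertices has total demand within the capacity of the columns it can reach, a condition of the form $\sum d_i\le 4\ell+2$ per block of length $\ell$, of which the stated inequalities are the blocks anchored at $v_1$. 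Weak contacts make every slot placement legal (the extreme slots merely touch the endpoint disks), so a feasible assignment gives a valid WUDCR.

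The main obstacle is the geometric heart of the packing bound: proving $K(\ell)\le 2\ell+4$ for an \emph{arbitrary}, possibly bent, chain. The route I would take is an angular accounting on the boundary $\Gamma$ of the radius-$2$ offset region $\bigcup_i B(c_i,2)$, where $c_i$ is the center of $D_i$. The centers of all touching disks lie on $\Gamma$ and are pairwise at distance $\ge 2$; $\Gamma$ is a union of convex radius-$2$ arcs meeting at concave notches above the valleys between consecutive chain disks. Each interior disk contributes $120^\circ$ of convex arc and each of the two end disks $240^\circ$, and on a convex radius-$2$ arc two admissible centers must be $\ge 60^\circ$ apart; charging the shared valley directions exactly once then caps the number of pairwise-separated touchers at $2\ell+4$. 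Bending only redistributes arc between the two sides and never increases the total, and the straight chain shows tightness. Verifying that the valley sharing is charged exactly once—so the per-disk count does not over-count at the notches—is the delicate step.
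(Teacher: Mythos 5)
Your route is genuinely different from the paper's: the paper argues by induction along the spine with an informal accounting of \enquote{free positions} on the hexagonal grid, whereas you separate a geometric packing bound (\(K(\ell)\le 2\ell+4\) disks can touch a chain of \(\ell\)) from a combinatorial assignment step (Hall's theorem on an interval-structured bipartite graph of leaves versus slots). That separation is a good idea, but it exposes a problem that you then dismiss in a parenthesis: both your necessity count and your Hall condition are naturally statements about \emph{every} block of consecutive spine vertices, \(\sum_{i=a}^{b} d_i \le 4(b-a+1)+2\), while the lemma only hypothesizes the blocks anchored at \(v_1\). These are not equivalent, and the prefix inequalities do not imply the block inequalities (from \(d_i\ge 2\) you only get \(\sum_{i=1}^{a-1}d_i\ge 2(a-1)\), whereas you would need \(4(a-1)\)). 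Concretely, take the caterpillar with spine degrees \(d_1=2\), \(d_2=6\), \(d_3=6\): the prefix sums \(2,8,14\) satisfy the bounds \(6,10,14\), yet no WUDCR exists. The disks of \(v_2\) and \(v_3\) each need six touching disks, which rigidly forces each neighborhood into the hexagonal arrangement; the two hexagons share a grid position besides the two spine disks themselves, and that position would have to be occupied simultaneously by a leaf of \(v_2\) and a distinct leaf of \(v_3\). Equivalently, your own block count yields \(10\) interior-disjoint disks touching the \(2\)-chain against a capacity of \(K(2)=8\). So the statement you were asked to prove is false as written (note also that reading the same longest path from the other end gives degrees \(6,6,2\), which \emph{fails} the prefix test at \(\ell=2\), so the condition is not even invariant under the orientation of the path); the correct characterization is your all-blocks condition, and your framework targets that version, not this one. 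The paper's induction commits the same error when it treats the positions \enquote{taken away from the smaller construction} as a global budget rather than as positions adjacent to the specific disk of \(v_{k-1}\).

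Beyond that, the one substantive gap inside your own argument is the packing bound \(K(\ell)\le 2\ell+4\) for an \emph{arbitrary} chain, which carries the whole necessity direction. Your angular accounting works for the straight chain, but the two claims you lean on for bent chains --- that bending \enquote{never increases the total} arc and that valley sharing can be \enquote{charged exactly once} --- are precisely where the difficulty sits: when the bend angle at a spine center drops below \(120^{\circ}\) the two excluded cones overlap and the free arc on that circle \emph{grows} beyond \(120^{\circ}\), and the \(60^{\circ}\)-separation rule only applies to two centers on the same circle, so points straddling a deep concave notch need a separate estimate. None of this is carried out, and it is not routine. The sufficiency half, by contrast, is clean once the hypothesis is upgraded to the all-blocks condition: the deficiency form of Hall's theorem with column capacities \(2\) and demands \(d_i-2\le 4\) works exactly as you describe.
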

\begin{proof}[Proof (by induction)]
For \(k=1\) we have one node with \(d_1\) leaves.
The corresponding disk can have up to \(6\) neighboring disks and \(d_1\leq 4+2=6\).

Assuming the hypothesis holds for all \(\ell<k\) we show that it holds for \(k\).
Look at \cref{fig:caterpillar-algorithm-example} for a depiction of the cases.
\begin{enumerate}
    \item
        \(d_{k}\leq 4\): up to \(3\) new leaves are added, no overlap with previous disks needed.
        It follows that \(\sum_{i=1}^k d_i = \sum_{i=1}^{k-1}d_i + d_k \leq \sum_{i=1}^{k-1}d_i + 4 \stackrel{\text{IH}}{\leq} [4(k-1)+2]+4 = 4k+2\).
    \item
        \(d_k=5\): exactly \(4\) new leaves are added, taking away one position from the smaller construction, hence for \(k-1\) it must hold that \(\sum_{i=1}^{k-1}d_i \leq 4(k-1)+1\).
        It follows that \(\sum_{i=1}^k d_i = \sum_{i=1}^{k-1}d_i + d_k = \sum_{i=1}^{k-1}d_i + 5 \leq [4(k-1)+1]+5 = 4k+2\).
    \item
        \(d_k=6\): exactly \(5\) new leaves are added, taking away two position from the smaller construction, hence for \(k-1\) it must hold that \(\sum_{i=1}^{k-1}d_i \leq 4(k-1)\).
        It follows that \(\sum_{i=1}^k d_i = \sum_{i=1}^{k-1}d_i + d_k = \sum_{i=1}^{k-1}d_i + 6 \leq [4(k-1)]+6 = 4k+2\).
    \qedhere
\end{enumerate}
\end{proof}

\begin{theorem}\label[theorem]{thm:caterpillar-linear-time}
It can be decided in linear time whether a caterpillar \(G\) admits a WUDCR.
\end{theorem}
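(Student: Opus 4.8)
The plan is to turn the combinatorial characterisation of \cref{lem:caterpillar-number-of-leaves} into an algorithm that runs in a single pass over the tree. First I would dispose of the degenerate degree regimes discussed just before the lemma: in one linear scan compute the maximum degree \(\Delta\); if \(\Delta\ge 7\) output \enquote{no} immediately, since no unit disk can touch seven others, and if \(\Delta\le 4\) output \enquote{yes}. The degree test is genuinely needed as preprocessing and is not subsumed by the prefix inequality: a spine vertex of degree \(7\) flanked by low-degree vertices can still satisfy \(\sum_{i=1}^{\ell} d_i\le 4\ell+2\) for every \(\ell\), yet is trivially unrealizable. After this step we may assume \(5\le\Delta\le 6\), which is exactly the regime in which \cref{lem:caterpillar-number-of-leaves} is informative.

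Next I would extract the data that the lemma refers to. A longest path of a caterpillar consists of the \emph{spine}, the path remaining after deleting all leaves, together with one leaf attached at each of its two ends; conversely every path uses spine vertices consecutively plus at most two leaves, so a longest path necessarily visits \emph{all} spine vertices. Hence the internal vertices \(v_1,\dots,v_k\) of any longest path are precisely the spine, and the degree sequence \(d_1,\dots,d_k\) is determined, up to reversal, independently of which extreme leaves one happens to pick. Both the spine and this sequence can be produced in \(O(n)\) time, either by structurally peeling the degree-one vertices once, or by the standard two-sweep (double-BFS) longest-path computation in a tree.

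It then remains to test \(\sum_{i=1}^{\ell} d_i\le 4\ell+2\) for every \(\ell\). Maintaining a running prefix sum, this is a single linear scan over \(d_1,\dots,d_k\). Because the inequality is stated for an \emph{oriented} path while the two endpoints of the longest path furnish the two orientations of the same spine, I would run the scan from both ends, which is still \(O(n)\) overall, and report the answer accordingly; combined with the degree preprocessing, correctness is immediate from \cref{lem:caterpillar-number-of-leaves}. Summing the three linear phases yields the claimed linear running time.

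The step I expect to be the crux is the second one: arguing cleanly that the internal vertices of \emph{every} longest path coincide with the spine, so that the degree sequence fed into the inequality is canonical (up to orientation) and the arithmetic test of the lemma is well defined on the whole graph rather than on an arbitrarily chosen path. Once that structural fact is in hand, the remaining ingredients—the \(\Delta\) bound, the prefix sums, and the final comparison—are routine linear scans, so no heavier machinery is required.
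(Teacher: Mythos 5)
Your proposal follows the same route as the paper: compute a longest path in linear time and verify the prefix inequalities of \cref{lem:caterpillar-number-of-leaves} with a running sum. The two safeguards you add are not redundant, however --- they patch genuine gaps in the paper's two-line argument. The paper's proof omits the \(\Delta\ge 7\) test, and your counterexample is real: spine degrees \(2,7,2\) satisfy every prefix inequality (sums \(2,9,11\) against bounds \(6,10,14\)) in both directions, yet the graph is unrealizable. Moreover, the prefix condition is not invariant under reversing the path: spine degrees \(2,6,6\) pass in that order (sums \(2,8,14\) against \(6,10,14\)) but fail as \(6,6,2\), and the caterpillar is indeed unrealizable (two adjacent degree-\(6\) vertices cannot both be fully surrounded), so scanning from only one arbitrarily chosen end --- as the paper does --- can return a wrong answer. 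The only point you leave vague is what \enquote{report the answer accordingly} means when the two scans disagree: the correct rule is to answer yes only if \emph{both} orientations satisfy the inequality.
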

\begin{proof}
First determine a longest path \(v_0,v_1,\ldots,v_k,v_{k+1}\) in linear time.
Then check for all \(1\leq\ell\leq k\) whether \(\sum_{i=1}^{\ell}d_i\leq 4\ell+2\).
This linear number of sums is easily checked in linear time.
With \cref{lem:caterpillar-number-of-leaves} this immediately tells us whether a WUDCR for \(G\) exists.
\end{proof}

\section{NP-hardness of Recognizing Trees}%
\label{sec:np-hardness-for-trees}

Recognizing whether a tree has a WUDCR is \NP-hard---we use a reduction from Not-All-Equal-3SAT (NAE3SAT)~\cite{DBLP:conf/stoc/Schaefer78} via a \emph{logic engine construction}~\cite{DBLP:journals/ipl/BhattC87}.
An instance for the NAE3SAT problem is a 3SAT formula and a yes-instance is a formula \(\phi\) for which an assignment exists, which satisfies \(\phi\) and additionally contains at least one false literal per clause.
The logic engine construction, see \cref{fig:logic-engine-concept}, works as follows:
Given a formula with variables \(x_1,\dots,x_n\) and clauses \(c_1,\dots,c_m\) we construct an orthogonal drawing representing this formula.
We have one horizontal spine to which one pole (consisting of a thick positive and thin negative part) for each variable is attached at its center.
Each pole has \(m\) levels on the top and \(m\) on the bottom, each side representing the \(m\) clauses.
\begin{figure}[t]
  \begin{subfigure}[t]{0.46\linewidth}
    \centering
    \includegraphics[width=1\textwidth,page=1]{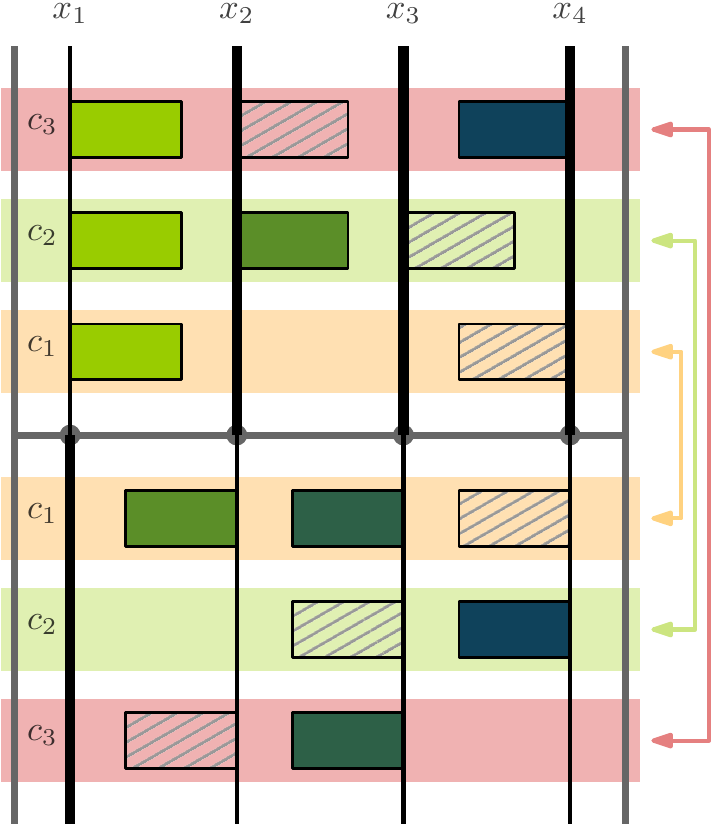}
    \caption{\(x_1=0\), \(x_2=1\), \(x_3=1\), and \(x_4=1\).}
  \end{subfigure}
  \hfill%
  \begin{subfigure}[t]{0.46\linewidth}
    \centering
    \includegraphics[width=1\textwidth,page=2]{logic-engine.pdf}
    \caption{\(x_1=1\), \(x_2=1\), \(x_3=0\), and \(x_4=1\).}
  \end{subfigure}
  \caption{%
      Two different logic engine realizations for the NAE3SAT formula with the three clauses \(c_1=(x_1, x_2, x_3)\), \(c_2=(x_1, \lnot x_2, x_4)\), and \(c_3=(x_1, x_3, \lnot x_4)\).
      Shaded flags correspond to literals which do not appear in a clause.
      For the poles: thicker means positive part, thinner means negative part.
      Note that in both cases there is one pole which is flipped.
      }%
  \label{fig:logic-engine-concept}
\end{figure}
We add a \emph{flag} to the \(i\)th pole on the \(j\)th level as follows:
\begin{enumerate*}
  \item If \(x_i\) appears as \(x_i\) in \(c_j\) we add a flag on the negative part,
  \item if \(x_i\) appears as \(\lnot x_i\) in \(c_j\) we add a flag on the positive part, and
  \item if \(x_i\) does not appear in \(c_j\) we add a flag on both parts (hatched in \cref{fig:logic-engine-concept}).
\end{enumerate*}
Two vertical poles are added, one on the left and one on the right.
Note that in both realizations of \cref{fig:logic-engine-concept} there is one pole which is flipped.
Otherwise it would not be drawable without overlap.

The question is now: Can the logic engine be drawn without overlap?
For the drawing the variable poles can be flipped along their center and the flags can be drawn either left or right.
In a non-overlapping drawing the leftmost pole puts its flags to the right and the rightmost one puts its flags to the left.
Hence, every level \(j\) (top \emph{and} bottom) needs at least one pole \(i\) \emph{without} a flag on this level.
The corresponding literal of \(x_i\) appears in \(c_j\), fulfilling \(c_j\).
There cannot be a clause \(c_j\) with only positive literals---then the \(j\)th level on the bottom would have a flag on every variable pole; this is impossible without overlap.
The upper part of the drawing finds a positive literal for each clause and the lower part finds a negative literal.
Whether a variable pole was flipped for the drawing gives a direct correspondence to the assignment of its corresponding variable to \(1\) (not flipped) or \(0\) (flipped).
Hence, the logic engine can be drawn without any overlaps if and only if the corresponding NAE3SAT formula is satisfiable.
Constructing a logic engine with a WUDCR of trees gives a direct reduction from NAE3SAT and thus shows \NP-hardness.

\subsection{Rigid Hexagons as Basic Building Blocks}%
\label{sec:rigid-hexagons}

The goal is to model the logic engine structure by weak unit disk representations of trees.
The weak model allows us to tightly pack disks, something we use heavily.
The whole construction will live on a hexagonal grid with distance \(2\) (the diameter of a unit disk) between the grid points.
The \emph{grid distance} between two grid points is the number of edges on a shortest path---two touching disks have grid distance \(1\).

\begin{figure}[t]
    \includegraphics[scale=0.85,page=2,clip]{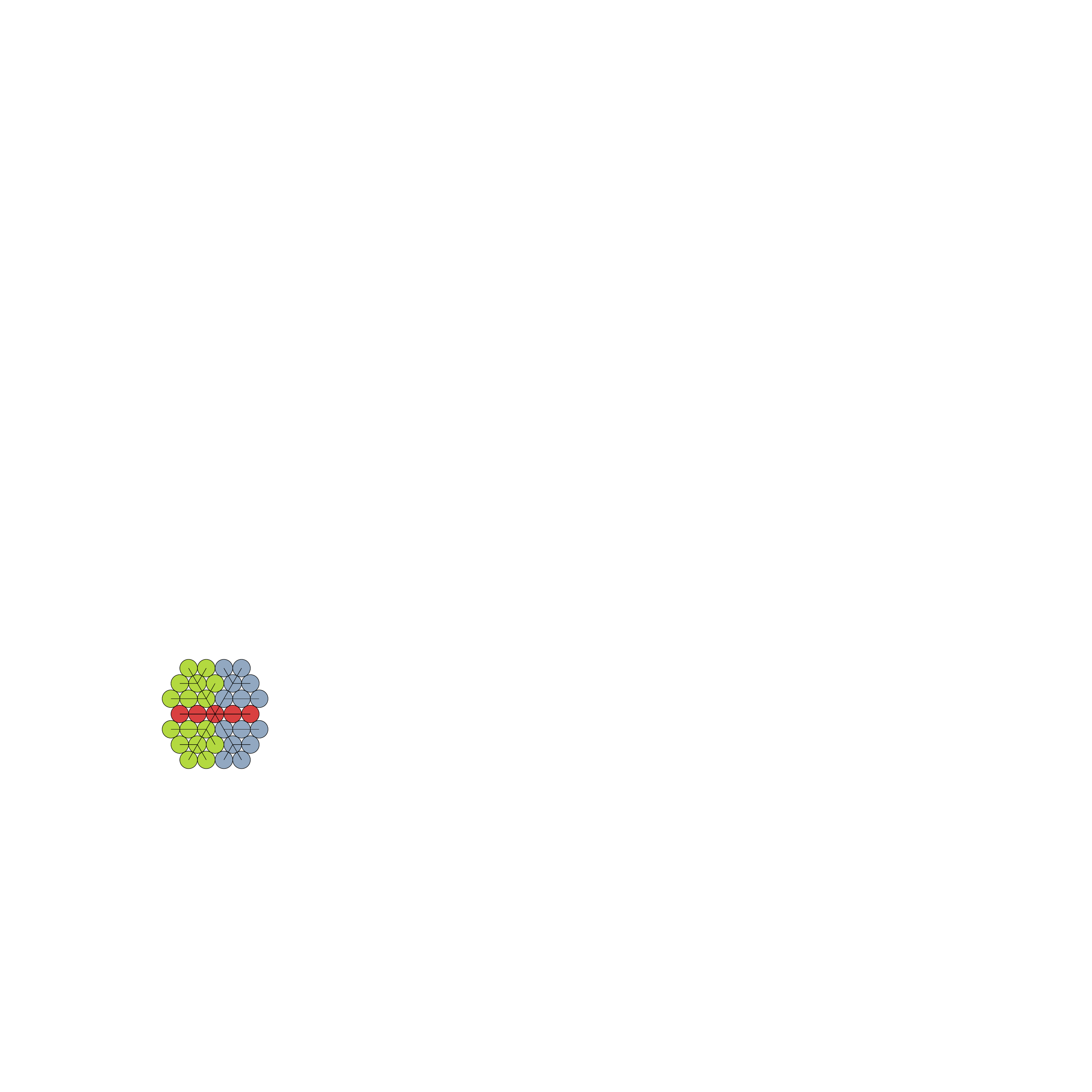}%
    \hfill%
    \includegraphics[scale=0.85,page=4,clip]{hexagons.pdf}%
    \hfill%
    \includegraphics[scale=0.85,page=6]{hexagons.pdf}%
    \caption{Hexagons with \(r=3,4,5\). Hexagons with arbitrary radii \(r\geq 3\) can be constructed.}%
    \label{fig:hexagon-examples}
\end{figure}

We will construct a tree which can only be realized as a hexagon and which can be chained together to form longer and rigid structures.
For the chaining we need two special vertices which will always be on opposite corners of the resulting hexagon.
In \cref{fig:hexagon-examples} there are examples with various radii \(r\) (maximal grid distance to the center) which fulfill this criterion, as will be shown in
\begin{lemma}%
    \label[lemma]{lem:rigid-hexagons}
    All possible WUDCR of the trees in \cref{fig:hexagon-examples} are hexagons where the (red) paths end on opposite corners of the hexagon.
\end{lemma}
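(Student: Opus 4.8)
The plan is to show that the specified trees admit essentially only one WUDCR by combining an elementary local packing argument with a global compactness (edge-isoperimetric) bound, and then to read off the positions of the two marked red endpoints from the resulting rigid shape. Throughout I would work with the disk \emph{centers}: since all disks are unit disks with pairwise disjoint interiors, any two centers are at Euclidean distance at least \(2\), and two disks touch exactly when their centers are at distance exactly \(2\). Because each tree edge forces a contact, adjacent centers lie at distance \(2\), and because a tree is connected, the whole contact structure is connected.

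First I would record the basic local facts. A single unit disk can be touched by at most six others, and \emph{if it is touched by exactly six}, those six centers are forced into the regular hexagon around it (the ``six pennies around a penny'' configuration), in which consecutive neighbors also touch. Thus every vertex realized with six contacts pins down a rigid flower of seven disks, all lying on a common triangular grid of mesh \(2\). The strategy is to let these forced flowers propagate: wherever two flowers share disks their grid alignment must agree, so a connected chain of degree-six vertices forces one global triangular grid. The trees in \cref{fig:hexagon-examples} are built with an interior in which this propagation reaches every disk, so the first goal is to prove that in any realization the centers all lie on a single triangular grid. Note that since a tree on \(N\) vertices has only \(N-1\) edges, not every disk can be forced purely by degree-six flowers, so this forcing must be argued globally rather than locally everywhere.

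Next comes the global step, which I expect to be the crux. Counting vertices, the radius-\(r\) tree has exactly \(N=3r^2+3r+1\) disks, the \(r\)-th centered hexagonal number, which is precisely the number of grid points within grid distance \(r\) of a center. I would then invoke an edge-isoperimetric bound on the triangular grid: among all \(N\)-point subsets inducing a connected unit-distance graph, the radius-\(r\) hexagon is the unique one, up to the grid's symmetries, maximizing the number of contacts, equivalently it is the unique densest packing of \(N\) disks. The tree is engineered so that its forced contacts already realize this extremal value, leaving no slack: any deviation from the hexagon would leave some boundary disk with too few neighbors to host the tree's high-degree interior, contradicting the bound. Formally I would run this as an induction on \(r\), showing that the outer ring of \(6r\) disks is forced to enclose a rigid radius-\((r-1)\) hexagon (the inductive hypothesis), with a direct base case at \(r=3\) handled by the flower-propagation argument above. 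The main obstacle is exactly this uniqueness of the minimal-perimeter configuration together with verifying that the \emph{particular} spanning tree, and not merely the hexagon's full contact graph, suffices to force it; ruling out ``defected'' near-hexagonal packings is where the care is needed.

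Finally, once the shape is pinned to the radius-\(r\) hexagon, the two red vertices are located by tracking the red paths through the forced grid. Each such path is forced to terminate at a corner of the hexagon, and since the two paths are separated across the full width \(2r\) of the hexagon, which is exactly the grid distance between two opposite corners, their endpoints can only form an antipodal pair of corners. This yields the claimed ``opposite corners'' conclusion and completes the proof of \cref{lem:rigid-hexagons}.
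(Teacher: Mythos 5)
Your opening local step (the six-pennies flower forcing a triangular grid and propagating through shared disks) matches the start of the paper's argument, but the step you identify as the crux --- the edge-isoperimetric / densest-packing bound --- does not work as stated, and it is not how the paper proceeds. The tree on \(N\) vertices forces only its \(N-1\) tree edges as contacts; in a \emph{weak} UDCR no further contacts are required, so the configuration of centers is under no obligation to maximize the number of touching pairs, and the extremal contact count for \(N\) disks (roughly \(3N\)) is nowhere near realized by the tree's edges alone. Hence the claim that ``its forced contacts already realize this extremal value, leaving no slack'' is false, and the uniqueness of the hexagon as the contact-maximizing configuration gives you no constraint on the realization. The paper instead obtains rigidity purely locally: the root with its six children is a tight flower, and on every level except the last some node (green in \cref{fig:hexagon-examples}) has three children, so its parent and two siblings together with those three children occupy all six neighbouring positions; that node is therefore the center of another forced flower, which propagates the grid alignment level by level. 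Combined with the observation that the number of tree nodes at depth \(k\) equals the number of grid positions at grid distance \(k\) from the center (so every position up to distance \(r\) must be occupied), this pins the realization to the radius-\(r\) hexagon without any appeal to extremal packing results.

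Your final step also has a gap. You assert that each red path ``is forced to terminate at a corner'' and that mutual grid distance \(2r\) forces an antipodal pair of corners; neither holds without further argument. Tree depth \(r\) only upper-bounds the grid distance of an endpoint from the center (a path may bend), and even granting that both endpoints lie at mutual grid distance \(2r\), a corner of the radius-\(r\) hexagon is at grid distance \(2r\) from every point of the two far edges meeting at the opposite corner, not only from that corner itself. The paper closes exactly this hole with a separate argument (\cref{fig:hexagon-path-has-to-be-straight}): if the two red path segments do not lie on a common line, some corner position \(c\) is left empty and has shortest-path distance \(k+1\) from the center while the tree has no node of depth \(k+1\), so \(c\) cannot be filled; since the counting argument shows every grid position must be covered, some disk then cannot be placed without overlap. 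You would need to supply an argument of this kind to reach the ``opposite corners'' conclusion.
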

\begin{proof}
    We show that the trees are always hexagons and that all red nodes lie on a line.

    Observe that in \cref{fig:hexagon-examples} a tree node has distance \(k\) to the root if and only if its corresponding disks has grid distance \(k\) from the center disk.
    Hence, we have exactly as many nodes with distance \(k\) from the root as we have positions with grid distance \(k\) from some fixed location.
    The root node with only its direct children is realized as a disk with six neighboring disks.
    This is a tight packing and w.l.o.g.\ we can assume that they lie on a hexagonal grid.
    Furthermore, all but the last level of the tree have at least one node with \(3\) children (green nodes in \cref{fig:hexagon-examples}); this forces them onto the hexagonal grid: \(3\) of the \(6\) neighboring positions are taken by the parent and two siblings, the other three by the children.
    As a result all nodes on this level are forced onto positions on the hexagonal grid and the result is a tight packing of disks which forms a hexagon.

    Assume that not all red nodes are placed on a line.
    Look at \cref{fig:hexagon-path-has-to-be-straight} where four such situations are depicted.
    Due to the structure of the green and blue subtrees, placing the disks as in \cref{fig:hexagon-path-has-to-be-straight:no-bend} leaves at least one grid position \(c\) empty.
    Placing the disks as in \cref{fig:hexagon-path-has-to-be-straight:bend} leaves the corner position \(c\) empty.
    A shortest path from the center disk to \(c\) (shown dotted) has distance \(k+1\) if \(k\) is \(c\)'s grid distance to the center.
    There is no node with depth \(k+1\) in the tree---\(c\) is left empty.
    However, as all grid positions with grid distance up to the tree's height are covered, there is at least one node whose disk cannot be placed without overlap.

    We conclude that all red nodes are forced on a line which places the two leaf nodes on opposite corners of the hexagon.
\end{proof}

We say that the trees for the hexagons have only one \emph{distinguishable} WUDCR:
It means that the placement of the important nodes (where something else will be connected to) does not change, but the placement of the other nodes may.

\begin{figure}[t]
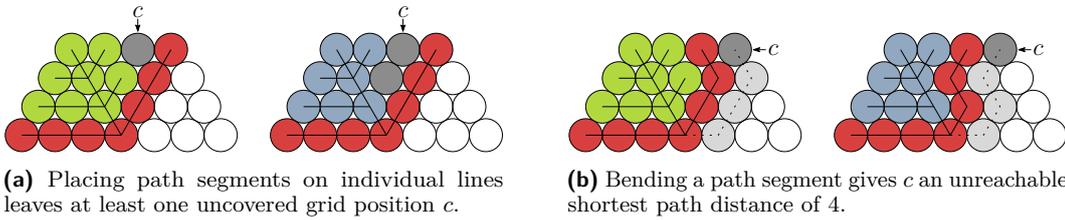

    \begin{subfigure}[t]{0.47\linewidth}
        \centering
        \includegraphics[width=\textwidth,page=12]{hexagons.pdf}
        \caption{Placing path segments on individual lines leaves at least one uncovered grid position \(c\).}%
        \label{fig:hexagon-path-has-to-be-straight:no-bend}
    \end{subfigure}
    \hfill%
    \begin{subfigure}[t]{0.47\linewidth}
        \centering
        \includegraphics[width=\textwidth,page=11]{hexagons.pdf}
        \caption{Bending a path segment gives \(c\) an unreachable shortest path distance of \(4\).}%
        \label{fig:hexagon-path-has-to-be-straight:bend}
    \end{subfigure}
    \caption{Not placing both path segments on a common line leaves unreachable grid positions.}%
    \label{fig:hexagon-path-has-to-be-straight}
\end{figure}

\begin{figure}[t]
    \hfill%
    \includegraphics[scale=1,page=7]{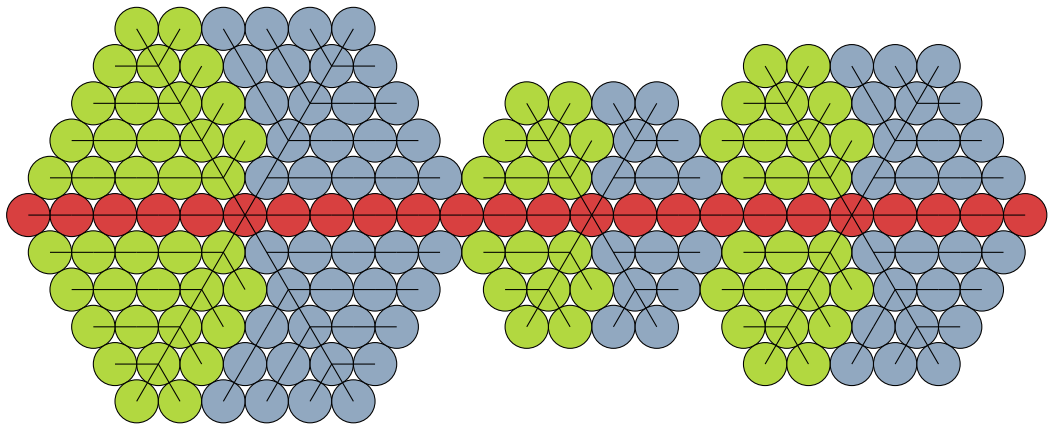}
    \hfill\null%
    \caption{%
        Connecting multiple hexagons, which can have different sizes.
        Observe, that the hexagons can even overlap more than one disk (center and right).
    }%
    \label{fig:connecting-hexagons}
\end{figure}

Plugging two hexagons together at a common endpoint forces them to lie on the same line.
With this we are able to construct longer straight paths. 
Additionally, the connected hexagons can differ in size or they can overlap by more than just one disk.
See an example of three connected hexagons in \cref{fig:connecting-hexagons}.

\subsection{A Branching Gadget}%
\label{sec:branching}

Apart from going in a straight line, we need to be able to branch off two \emph{branches} from a straight part (the \emph{trunk}) to simulate the variable poles and flags.
Additionally, it must allow for both branches to be interchanged to flip the variable poles or flags between two sides.
As we could not branch orthogonally in a fully rigid way we will instead introduce a branching gadget which branches off at a \SI{60}{\degree} angle, see \cref{fig:branching-gadget}.

\begin{figure}[t]
    \hfill%
    \includegraphics[scale=0.5,page=8]{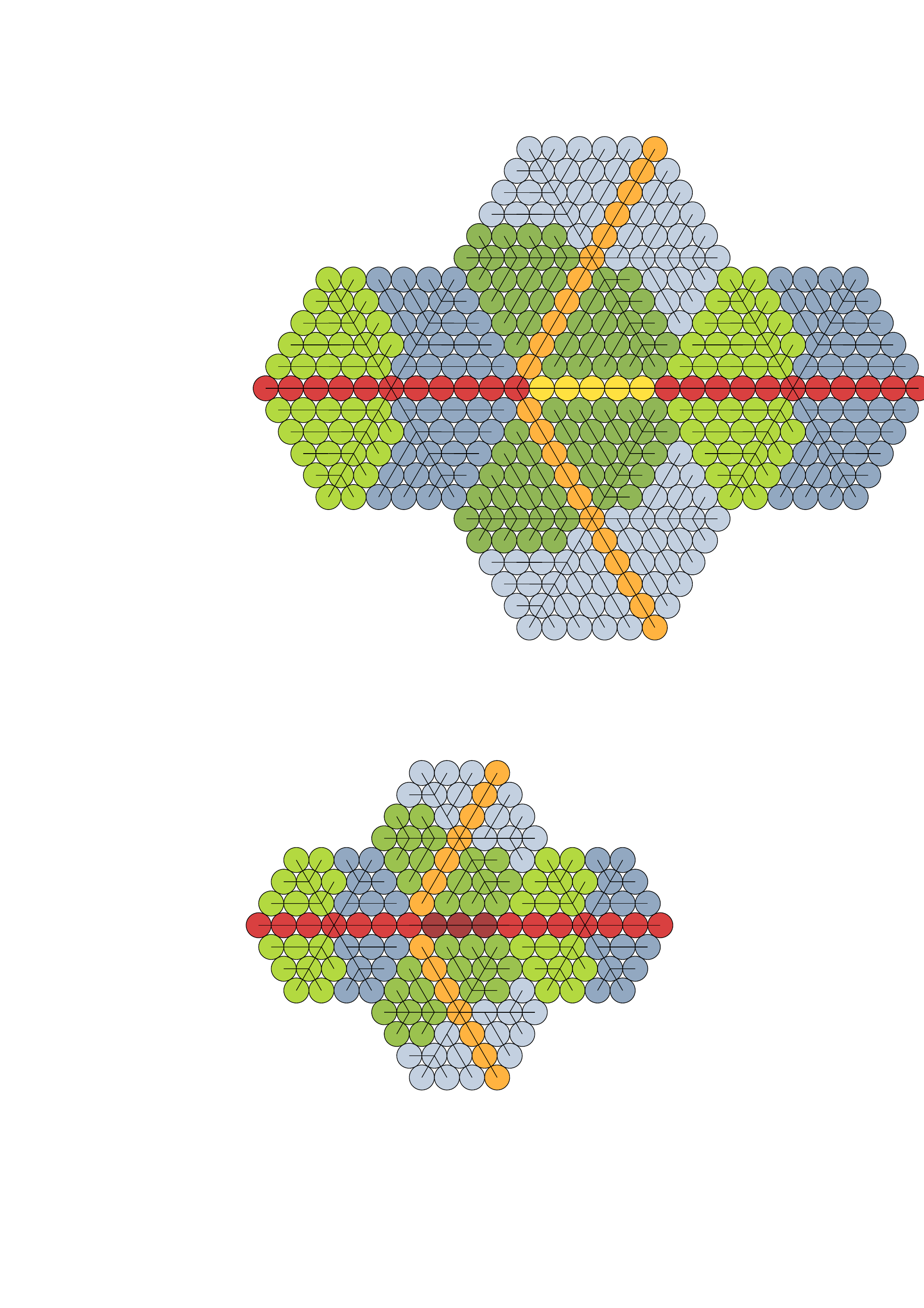}%
    \hfill%
    \includegraphics[scale=0.5,page=9]{non-orthogonal.pdf}%
    \hfill%
    \includegraphics[scale=0.5,page=10]{non-orthogonal.pdf}%
    \hfill\null%
    \caption{
        The branching gadget with \SI{60}{\degree} angle and interchangeable sides (left).
        Placing the horizontal part non-horizontally does not leave sufficient room for both branches (center, right).
    }%
    \label{fig:branching-gadget}
\end{figure}

\begin{obs}%
    \label[obs]{obs:branching-gadget-two-realizations}
    The branching gadget in \cref{fig:branching-gadget} (left) has exactly two WUDCR which differ in the placement of the red vertices.
\end{obs}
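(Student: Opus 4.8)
The plan is to reduce everything to the rigidity we have already established for chained hexagons. The trunk of the branching gadget is assembled from hexagon subtrees, so by \cref{lem:rigid-hexagons} each such subtree is forced onto the hexagonal grid, and the chaining argument (\cref{fig:connecting-hexagons}) forces all of them onto one common straight line. Fixing the center disk and a reference direction therefore pins down the entire trunk up to a distinguishable WUDCR, so the only remaining freedom lies in how the two branches attach. I would make this precise by first arguing that each branch subtree is itself rigid (again hexagon-like), so that each branch can only emanate from the branching vertex in one of the six lattice directions.

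Next I would examine the local neighborhood of the branching vertex. Two of its six lattice directions are consumed by the two collinear trunk directions, and the parent and sibling disks of the branching vertex occupy further neighboring positions; a direct inspection of \cref{fig:branching-gadget} (left) shows that exactly the two directions at \SI{60}{\degree} to the trunk remain available. A counting argument in the spirit of \cref{lem:rigid-hexagons}---comparing, at each grid distance $k$, the number of tree nodes at depth $k$ with the number of free grid positions at that distance---then shows that both branches must be realized, one in each \SI{60}{\degree} slot, and that no position is left over. Swapping which branch occupies which slot exchanges exactly the red vertices while leaving every trunk and internal disk fixed, yielding the two claimed representations; the symmetry of the gadget guarantees there is no further degree of freedom, so these two are the only distinguishable WUDCR.

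The main obstacle is to rule out the non-horizontal placements of the trunk shown in \cref{fig:branching-gadget} (center and right). For these I would adapt the uncovered-grid-position technique from the proof of \cref{lem:rigid-hexagons}: if the horizontal part is tilted or bent so that the two branches no longer fit into the two available \SI{60}{\degree} directions, then either the branch disks are forced to overlap or some grid position within the tree's reach is left uncovered while a disk has nowhere to go. The delicate point is that the \emph{weak} model permits disks to touch without sharing an edge, so I must show that no alternative tight packing recovers the lost room---this is the step where a careful case distinction over the possible bends, rather than a single clean counting identity, is likely to be needed.
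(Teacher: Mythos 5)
Your proposal is correct and follows essentially the same route as the paper: invoke \cref{lem:rigid-hexagons} for the rigidity of the constituent hexagons, observe that only the horizontal placement of the trunk continuation leaves room for both branching hexagons (ruling out the placements in \cref{fig:branching-gadget}, center and right), and conclude by symmetry that interchanging the two branches yields exactly two distinguishable WUDCR. The paper simply asserts the ``no space for both hexagons'' step by pointing at the figure, whereas you propose to back it up with the uncovered-grid-position counting argument from \cref{lem:rigid-hexagons}; that is a more careful elaboration of the same idea rather than a different approach.
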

\begin{proof}
    From \cref{lem:rigid-hexagons} we know that the four hexagons are rigid.
    Two hexagons and one path (ending in a hexagon) are connected to the leftmost hexagon.
    As shown in \cref{fig:branching-gadget} (left) it is possible to place the path horizontally.
    Placing the path differently, e.g.\ as in \cref{fig:branching-gadget} (center and right), leaves no space to fit both hexagons.
    Due to symmetry, both branching hexagons can be interchanged in the left case, giving two distinguishable WUDCR.
\end{proof}

\subsection{Simulating the Logic Engine}%
\label{sec:logic-engine-simulation}

\begin{figure}[t]
  \begin{subfigure}[t]{0.48\linewidth}
    \centering
    \includegraphics[width=1\textwidth,page=1]{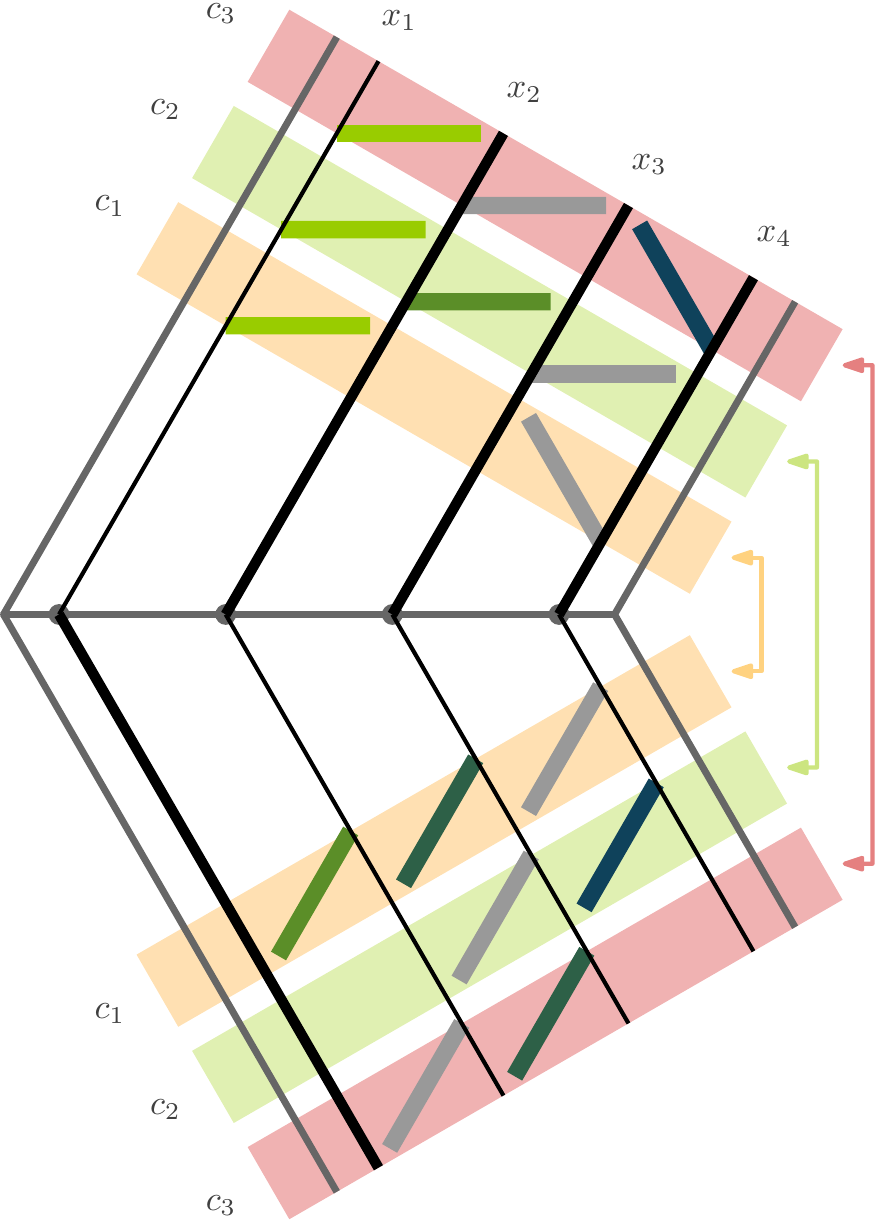}
    \caption{\(x_1=0\), \(x_2=1\), \(x_3=1\), and \(x_4=1\).}
  \end{subfigure}
  \hfill%
  \begin{subfigure}[t]{0.48\linewidth}
    \centering
    \includegraphics[width=1\textwidth,page=2]{logic-engine-tilted.pdf}
    \caption{\(x_1=1\), \(x_2=1\), \(x_3=0\), and \(x_4=1\).}
  \end{subfigure}
  \caption{%
      Two logic engine realizations for the clauses \(c_1=(x_1, x_2, x_3)\), \(c_2=(x_1, \lnot x_2, x_4)\), and \(c_3=(x_1, x_3, \lnot x_4)\).
      It is a modification of \cref{fig:logic-engine-concept} where the branching angles are \SI{60}{\degree} instead of \SI{90}{\degree}.
  }%
  \label{fig:logic-engine-concept-tilted}
\end{figure}

The logic engine needs orthogonal branching and we only have branches at at \SI{60}{\degree} angle.
Hence, it is necessary to modify the logic engine in a way to accommodate for such a difference.
\cref{fig:logic-engine-concept-tilted} shows a modification of \cref{fig:logic-engine-concept} which only includes \SI{60}{\degree} angles.
Flipping of the poles and flags happens by mirroring them along the line segment they are attached to.
As can be seen, the clauses are still orthogonal to the variable poles s.t.\ two flags in the same free space are forced to overlap.

\begin{figure}[p]
    \hfill%
    \includegraphics[width=\textwidth,page=11]{}
    \hfill\null%
    \caption{%
        An example of a logic engine for the formula with clauses \(c_1=(x_1, x_2, x_3)\), \(c_2=(x_1, \lnot x_2, x_4)\), and \(c_3=(x_1, x_3, \lnot x_4)\).
        The variables are set to \(x_1=0\), \(x_2=1\), \(x_3=1\), and \(x_4=1\).
    }%
    \label{fig:whole-picture}
\end{figure}

\begin{figure}[tbp]
    \hfill%
    \includegraphics[width=0.5\textwidth,page=12]{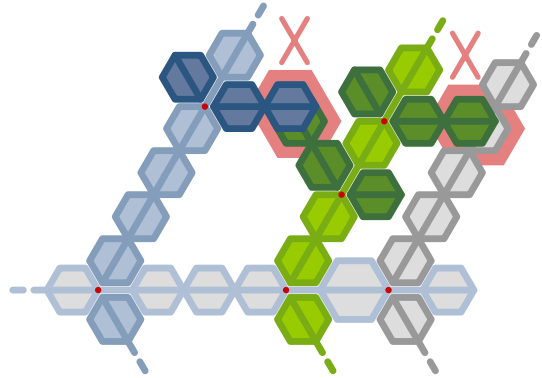}
    \hfill\null%
    \caption{%
        Possible cases of overlaps are highlighted:
        No two flags can be in the same place (left) and all flags on the outer variable poles must face inside (right).
    }%
    \label{fig:reduction-details}
\end{figure}

\begin{theorem}%
    \label[theorem]{thm:np-hardness-of-recognizing-trees}
    It is \NP-hard to decide whether a tree has weak unit disk contact representation.
\end{theorem}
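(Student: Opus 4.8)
The plan is to assemble the gadgets established above into a single tree \(T_\phi\) that simulates the tilted logic engine of \cref{fig:logic-engine-concept-tilted}, and to show that \(T_\phi\) admits a WUDCR if and only if the given NAE3SAT formula \(\phi\) is a yes-instance. Since it has already been argued in \cref{sec:np-hardness-for-trees} that a logic engine can be drawn without overlap exactly when its formula is NAE-satisfiable, it suffices to realize every component of the logic engine as a subtree whose distinguishable WUDCR are in bijection with the drawing's degrees of freedom. First I would build the horizontal spine and the two outer poles as rigid straight segments by chaining hexagons as in \cref{fig:connecting-hexagons}, using \cref{lem:rigid-hexagons} to guarantee that each such chain has only one distinguishable realization (a straight line with its endpoints fixed). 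The variable poles are then attached to the spine, and the flags to the poles, using the branching gadget of \cref{obs:branching-gadget-two-realizations}, which has exactly two distinguishable WUDCR. This is the whole construction depicted in \cref{fig:whole-picture}.

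The key observation is that the two states of each branching gadget encode precisely the two binary choices present in the logic engine: flipping a variable pole along its attachment to the spine, and drawing a flag to the left or to the right of the pole it hangs from. Because the branching happens at a \SI{60}{\degree} angle rather than orthogonally, the logic engine itself is pre-tilted as in \cref{fig:logic-engine-concept-tilted} so that the clause levels remain transversal to the poles; this is exactly the property needed so that two flags competing for the same free slot are forced to overlap. The outer (leftmost and rightmost) poles are plain hexagon chains whose flags have only one feasible side, namely facing inward, as illustrated on the right of \cref{fig:reduction-details}; this plays the role of the fixed boundary poles in the classical construction. All gadgets are themselves trees and are joined only at shared endpoint vertices, so no cycle is ever created and \(T_\phi\) is a tree; the number of hexagons and branching gadgets is polynomial (in fact linear) in \(n+m\), and each hexagon of bounded radius has a bounded number of nodes, so \(T_\phi\) is constructible in polynomial time.

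For correctness I would argue both directions through the intermediate logic-engine drawing. If \(\phi\) is NAE-satisfiable, the corresponding non-overlapping logic-engine drawing yields a placement of all gadgets; instantiating each hexagon chain on the hexagonal grid and each branching gadget in the state matching the drawing produces a valid WUDCR. Conversely, given any WUDCR of \(T_\phi\), \cref{lem:rigid-hexagons} fixes every hexagon chain to a straight segment with fixed endpoints, and \cref{obs:branching-gadget-two-realizations} forces each branching gadget into one of its two states; reading off these states recovers an overlap-free logic-engine drawing, whence a satisfying NAE assignment (pole flipped \(\leftrightarrow\) variable set to \(0\)). The main obstacle, and the part requiring the most care, is the global non-interference argument: because the weak model permits disks of different gadgets to touch without an edge, I must show that the rigid segments and branching states genuinely constrain the layout, i.e.\ that the free space at each clause level is exactly one flag wide so that two flags on the same level cannot coexist (\cref{fig:reduction-details}, left), and that no gadget can borrow grid positions from a neighbor to escape an otherwise forced overlap. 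This amounts to a careful packing analysis on the hexagonal grid verifying that the per-level capacity is tight and that the only slack in the entire construction is the two-state freedom of the branching gadgets.
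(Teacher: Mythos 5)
Your proposal follows essentially the same route as the paper: model the tilted logic engine with rigid hexagon chains (\cref{lem:rigid-hexagons}) and two-state branching gadgets (\cref{obs:branching-gadget-two-realizations}), and transfer correctness through the NAE3SAT logic-engine equivalence already established, with the packing argument of \cref{fig:reduction-details} ensuring no two flags share a free slot. One small inaccuracy: the number of gadgets is not linear in \(n+m\) but rather \(O(n^2+mn)\), since each of the \(n\) variable poles has length \(O(n+m)\); this is still polynomial, so the reduction goes through unchanged.
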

\begin{proof}
    We model the logic engine from \cref{fig:logic-engine-concept-tilted} with the hexagon and branching gadgets to reduce NAE3SAT to our problem.
    See \cref{fig:whole-picture} for an example of how a resulting drawing might look like.
    Apart from the two hexagons with radius \(4\) near the left and right end of the horizontal spine we only use hexagons with radius \(3\) and the branching gadget from before (also with radius \(3\) hexagons).

    As shown in \cref{fig:reduction-details} the distance between two variable poles and placement of the branching gadgets enforces that no two flags can be placed into the same free space.
    Furthermore, the left and right frame prevent flags from being drawn to the outside.
    The tree constructed from a boolean formula has a WUDCR if and only if no overlap occurs.
    This happens if and only if for every level \(j\) / clause \(c_j\) there is at least one flag less than the total number of variables; or to put it differently: if and only if not all three variables appearing in \(c_j\) place a flag on the bottom and not all on the top.
    This then gives a satisfying assignment of variables where not all literals evaluate to \(1\).

    The size of the construction is polynomial in the number of clauses \(m\) and variables \(n\).
    There is a constant distance between two variable poles, hence, the size of the horizontal spine is \(O(n)\).
    The further left a variable pole is the longer it has to be.
    The part without branching grows linearly in \(n\) (\(2\) more hexagons per step to the left) and the branching part grows linearly in \(m\), since each branching gadget with flag has constant size: the total size of one variable pole is \(O(n+m)\).
    For \(n\) variable poles (and the two frames) this gives a total size of \(O(n^2+mn)\) for the whole construction and it can be easily constructed in polynomial time.
\end{proof}

\section{Conclusion}

We showed that in linear time we can decide whether a caterpillar graph can be realized as a weak unit disk contact representation.
On the other hand, the same problem is \NP-hard for trees.
The main open question remains whether lobster graphs (every node has distance at most \(2\) to a central path) can be recognized in polynomial time or whether it is \NP-hard to recognize them.
This can be generalized to look at trees where each node has a distance at most \(d\) from a central path.

\subparagraph*{Acknowledgments.}
The author wants to give special thanks to Sujoy Bhore, Man-Kwun Chiu, Soeren Nickel, and Martin Nöllenburg for the discussions and ideas during a research visit in Vienna which laid the groundwork for this paper.

\bibliography{weak-udcr}

\end{document}